\documentclass[runningheads]{llncs}
\usepackage[T1]{fontenc}
\usepackage{amsmath, amssymb}
\usepackage{mathtools}
\usepackage{comment}

\usepackage{tikz}
\usetikzlibrary{fit,backgrounds, arrows.meta, calc, positioning, matrix}

\usepackage{algorithm}
\usepackage{algpseudocode}
\usepackage{placeins} % in preamble
\usepackage{url}
\newcommand{\doi}[1]{\url{https://doi.org/#1}}
\newcommand{\asc}{\operatorname{asc}}

\spnewtheorem{fact}[theorem]{Fact}{\bfseries}{\itshape}

\providecommand{\qedsymbol}{\ensuremath{\square}}
\newcommand{\qedbox}{\hfill\qedsymbol}

\title{A Unary-to-Nonunary Transition in the Accepting-State Spectrum of Right Quotient for Permutation Automata}
\titlerunning{Beyond Unary Right Quotient for Permutation Automata}
\usepackage{bbding}

\author{Samuel German \Envelope}
\authorrunning{Samuel German}
\institute{University of California, San Diego, USA \\
\email{sgerman@ucsd.edu}}

\begin{document}
\maketitle
\begin{abstract}
This paper resolves the open larger-alphabet quotient case in the accepting-state
complexity theory of permutation automata. Rauch and Holzer showed that, in the
unary setting, the attainable right-quotient accepting-state complexities are
exactly \([1,mn]\). We prove that over arbitrary alphabets the exact spectrum is

\[
g^{\asc}_{-1,\mathrm{PFA}}(m,n)=
\begin{cases}
\{0\}, & \text{if } m=0 \text{ or } n=0,\\[1mm]
\mathbb{N}_{>0}, & \text{if } m,n\ge 1.
\end{cases}
\]
Thus, once both input languages are nonempty, every positive
accepting-state complexity is attainable for right quotient, and \(0\) is the
only unavoidable magic value.

The proof has two parts. First, we show that if \(m,n\ge 1\), then the quotient
language \(KL^{-1}\) cannot be empty when \(K\) and \(L\) are accepted by
permutation automata with \(\asc(K)=m\) and \(\asc(L)=n\); this follows from the
bijectivity of the transition action. Second, for every \(m,n\ge 1\) and every
\(\alpha\ge m\), we construct a ternary witness pair
\((A^{\mathrm q}_{m,\alpha},B^{\mathrm q}_{n,\alpha})\) such that
\(
\asc(L(A^{\mathrm q}_{m,\alpha}))=m\), \(
\asc(L(B^{\mathrm q}_{n,\alpha}))=n\), and \(
\asc\!\bigl(L(A^{\mathrm q}_{m,\alpha})L(B^{\mathrm q}_{n,\alpha})^{-1}\bigr)=\alpha \).

The high-range construction is group-theoretic: the words accepted by
\(B^{\mathrm q}_{n,\alpha}\) induce exactly a point stabilizer in a symmetric
group, and the standard quotient construction then saturates the original final
set of \(A^{\mathrm q}_{m,\alpha}\) to a full orbit, yielding a minimal quotient
automaton with exactly \(\alpha\) final states. Combined with the known unary
interval \([1,mn]\), this yields the complete spectrum and resolves the
larger-alphabet right-quotient case for permutation automata.
\keywords{accepting-state complexity; permutation automata; right quotient; exact spectrum; magic numbers; unary languages}
\end{abstract}
\section{Introduction}
\label{sec:introduction}

The main contribution of this paper is a complete solution of the larger-alphabet
right-quotient problem left open by Rauch and Holzer~\cite{RauchHolzer2023}. Their
unary result shows a rigid finite spectrum, namely \([1,mn]\), for positive
input complexities. We show that this rigidity is entirely unary. Once the
alphabet is allowed to be nonunary, the spectrum becomes complete: every
positive integer occurs as the accepting-state complexity of a right quotient of
two permutation-automaton languages with prescribed positive accepting-state
complexities.
Over arbitrary alphabets, the exact spectrum is
\[
g^{\asc}_{-1,\mathrm{PFA}}(m,n)=
\begin{cases}
\{0\}, & \text{if } m=0 \text{ or } n=0,\\[1mm]
\mathbb{N}_{>0}, & \text{if } m,n\ge 1.
\end{cases}
\]
Equivalently, once both input languages are nonempty, every positive
accepting-state complexity is attainable for right quotient, and \(0\) is the
only unavoidable magic value.

Accepting-state complexity, introduced and systematically studied by
Dassow~\cite{Dassow2016}, measures descriptional complexity by minimizing the
number of accepting states rather than the total number of states. For general
regular languages, right quotient has essentially unrestricted accepting-state
behavior; in particular, for parameters \(m,n\ge 1\), the corresponding DFA
spectrum is \(\{0\}\cup\mathbb{N}\); see
Hospod{\'a}r~\cite[Chapter~6]{HospodarHolzer2020Ranges}. Against this background,
the unary permutation-automaton interval \([1,mn]\) found by Rauch and
Holzer~\cite{RauchHolzer2023} is strikingly rigid.

Building on Dassow's framework, Rauch and Holzer~\cite{RauchHolzer2023}
initiated the systematic study of accepting-state complexity for operations on
permutation automata. In their treatment of quotient, they first record the
standard final-set transformation for \(KL^{-1}\), and then exploit the unary
coincidence of left and right quotient to obtain the complete unary spectrum.
The larger-alphabet case, however, is not settled there. The theorem proved in
this paper closes that remaining case.

Our result has two conceptually distinct parts. First, we show that if
\(m,n\ge 1\), then \(\asc(KL^{-1})\neq 0\) for all languages \(K\) and \(L\)
accepted by permutation automata with \(\asc(K)=m\) and \(\asc(L)=n\).
This is a structural consequence of bijectivity: for nonempty inputs, the
quotient language cannot be empty. Second, for every \(m,n\ge 1\) and every
\(\alpha\ge m\), we construct a ternary witness pair
\((A^{\mathrm q}_{m,\alpha},B^{\mathrm q}_{n,\alpha})\) such that
\[
\asc(L(A^{\mathrm q}_{m,\alpha}))=m,\qquad
\asc(L(B^{\mathrm q}_{n,\alpha}))=n,\qquad
\asc\!\bigl(L(A^{\mathrm q}_{m,\alpha})L(B^{\mathrm q}_{n,\alpha})^{-1}\bigr)=\alpha.
\]
Combined with the unary interval \([1,mn]\), this yields the full spectrum.

The high-range construction is group-theoretic. The automaton
\(A^{\mathrm q}_{m,\alpha}\) acts on \([k]\), where \(k=\alpha+1\), while
\(B^{\mathrm q}_{n,\alpha}\) acts on \([k]\times\mathbb{Z}_{n+1}\). The language
of \(B^{\mathrm q}_{n,\alpha}\) is designed so that the permutations induced on
the first coordinate are exactly the point stabilizer
\(\operatorname{Stab}_{S_k}(k)\). Under the standard quotient construction,
this saturates the original \(m\)-element final set of \(A^{\mathrm q}_{m,\alpha}\)
to the whole orbit \([k-1]\), yielding a minimal quotient automaton with exactly
\(k-1=\alpha\) final states. Thus the mechanism behind the nonunary spectrum is
not merely the availability of larger alphabets, but the availability of a
stabilizer action that expands the quotient final set across a large orbit.

Conceptually, the theorem shows that the bounded quotient spectra seen in unary
permutation automata are not representative of the general case. The only
persistent obstruction is the structural impossibility of \(0\) for nonempty
inputs; beyond that, the spectrum is complete. In this sense, right quotient
for permutation automata undergoes a genuine phase transition from the unary to
the nonunary setting.

Section~\ref{sec:preliminaries} collects the required background on
accepting-state complexity, right quotient, and permutation-group actions.
Section~\ref{sec:quotient} proves the exact spectrum theorem. 
\section{Preliminaries}
\label{sec:preliminaries}

All automata in this paper are deterministic and complete. For \(n \ge 1\), we write
\([n] := \{1,\dots,n\}\). If \(Q\) is a finite set and \(a : Q \to Q\) is a transition map,
we write \(q \cdot a\) for the image of \(q\), and extend this notation to words
\(w \in \Sigma^{*}\) in the usual way.

A \emph{deterministic finite automaton} (DFA) is a tuple
\(A = (Q,\Sigma,\cdot,s,F)\), where \(Q\) is a finite state set, \(\Sigma\) is a
finite input alphabet, \(\cdot : Q \times \Sigma \to Q\) is the transition
function, \(s \in Q\) is the initial state, and \(F \subseteq Q\) is the set of
final states. The language accepted by \(A\) is
\(L(A) := \{\, w \in \Sigma^{*} \mid s \cdot w \in F \,\}\).
A DFA is \emph{minimal} if no equivalent DFA has fewer states.

A \emph{permutation automaton} is a DFA in which, for every \(a \in \Sigma\), the map
\(q \mapsto q \cdot a\) is a permutation of \(Q\). Hence every word
\(w \in \Sigma^{*}\) induces a permutation \(\pi_{w}\) of \(Q\), given by
\(q \pi_{w} = q \cdot w\) for \(q \in Q\).

For a regular language \(L\), the \emph{accepting-state complexity} of \(L\),
denoted by \(\asc(L)\), is the minimum number of final states among all DFAs
accepting \(L\).

For languages \(K,L \subseteq \Sigma^{*}\), the \emph{right quotient} of \(K\)
by \(L\) is \[K L^{-1} := \{\, x \in \Sigma^{*} \mid \exists y \in L:\ xy \in K \,\}.\]

\[
g^{\asc}_{-1,\mathrm{PFA}}(m,n)
:= \left\{
\alpha \ge 0 \;\middle|\;
\begin{array}{@{}l@{}}
\exists \Sigma,\ \exists K,L \subseteq \Sigma^*,\\
K,L \text{ accepted by permutation automata over } \Sigma,\\
\asc(K)=m,\ \asc(L)=n,\ \asc(KL^{-1})=\alpha
\end{array}
\right\}.
\]
The following standard fact, due to Dassow~\cite{Dassow2016}, allows us to
identify accepting-state complexity directly from a minimal witness automaton.

\begin{proposition}[{\cite{Dassow2016}}]
\label{prop:minimal-realizes-asc}
Let \(A=(Q,\Sigma,\cdot,s,F)\) be a minimal DFA. Then
\[
\asc(L(A)) = |F|.
\]
\end{proposition}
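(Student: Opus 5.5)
The plan is the standard Myhill--Nerode argument, carried out in two inequalities. Let \(A=(Q,\Sigma,\cdot,s,F)\) be minimal and let \(B=(P,\Sigma,\circ,t,G)\) be any DFA with \(L(B)=L(A)\). The goal is to show \(|G|\ge |F|\). Since \(A\) itself accepts \(L(A)\) with \(|F|\) final states, this gives \(\asc(L(A))=|F|\).

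First, I would pass to the reachable part of \(B\). Let \(P_r=\{t\circ w : w\in\Sigma^*\}\). Restricting \(B\) to \(P_r\) gives a complete DFA for the same language whose final set \(G\cap P_r\) is no larger than \(G\). So we may assume every state of \(B\) is reachable.

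Next, I would define a map \(h:P\to Q\) by \(h(t\circ w)=s\cdot w\). The key step is well-definedness. Suppose \(t\circ w=t\circ w'\). Then for every \(z\) we have \(wz\in L\) if and only if \(w'z\in L\), so \(w\) and \(w'\) are Nerode-equivalent. Because \(A\) is minimal, its states are reachable and pairwise distinguishable. Hence \(s\cdot w=s\cdot w'\): otherwise some \(z\) would distinguish these two states, and that same \(z\) would distinguish \(w\) from \(w'\). So \(h\) is well defined.

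The map \(h\) is surjective, because every state of \(A\) is reachable. It also preserves finality: \(t\circ w\in G\) if and only if \(w\in L\) if and only if \(s\cdot w\in F\). Therefore \(h\) maps \(G\) onto \(F\), and so \(|G|\ge |F|\).

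The main obstacle is the well-definedness of \(h\), since this is the only place where minimality of \(A\) is used. It requires that a minimal DFA is reachable and has pairwise distinguishable states, which is the standard characterization of minimal DFAs. I would cite that characterization rather than reprove it. Everything else in the argument is bookkeeping.
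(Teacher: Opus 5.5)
Your proposal is correct and is the standard argument. The paper gives no proof of its own and simply cites Dassow. The result rests on exactly your observation: the reachable part of any DFA for \(L\) maps onto the minimal DFA, and under that surjection the final states are carried onto the final states, so no DFA for \(L\) can have fewer than \(|F|\) of them.
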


The next two observations are standard. For
Lemma~\ref{lem:accessible-part-permutation}, the reachable state set of a
permutation automaton is the orbit of the initial state under the transition
group, and restricting a permutation action to an orbit again yields a
permutation action; see, for example,
Dixon--Mortimer~\cite[Chapter~1]{DixonMortimer1996}. For
Lemma~\ref{lem:positive-words-generate-group}, we use the finite-group fact
that every nonempty finite subsemigroup of a group is a subgroup. Applied to the submonoid of
\(\mathrm{Sym}(X)\) generated by \(c_1,\dots,c_r\), this yields exactly the
stated conclusion.

\begin{lemma}
\label{lem:accessible-part-permutation}
The accessible part of a permutation automaton is again a permutation
automaton.
\end{lemma}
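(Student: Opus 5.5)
The plan is to take a permutation automaton \(A=(Q,\Sigma,\cdot,s,F)\), let \(R := \{\, s\cdot w \mid w\in\Sigma^*\,\}\) be its set of reachable states, and verify two things. First, \(R\) is closed under every letter map. Second, each letter map restricted to \(R\) is a bijection of \(R\). Once both hold, the restricted automaton \((R,\Sigma,\cdot|_{R},s,F\cap R)\) is a permutation automaton. It accepts \(L(A)\), since every run from \(s\) stays inside \(R\).

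Closure is immediate from the definition. If \(q=s\cdot w\in R\) and \(a\in\Sigma\), then \(q\cdot a=s\cdot(wa)\in R\). So each \(\pi_a\) maps \(R\) into \(R\), and \(\pi_a|_R : R\to R\) is well defined.

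The only real point, and the one step needing an argument, is surjectivity onto \(R\): we must show that every \(r\in R\) has a preimage under \(\pi_a\) lying in \(R\) itself, not merely somewhere in \(Q\). I will use finiteness of the permutation group. Since \(\pi_a\) is a permutation of the finite set \(Q\), it has finite order \(t\ge 1\). Then \(\pi_a^{-1}=\pi_a^{t-1}=\pi_{a^{t-1}}\) is induced by the word \(a^{t-1}\).

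Now take \(r=s\cdot w\in R\). The element \(r\pi_a^{-1}=s\cdot(wa^{t-1})\) lies in \(R\), and applying \(a\) to it returns \(r\). Hence \(\pi_a|_R\) is surjective, and injectivity is inherited from \(\pi_a\). Equivalently, \(R\) is the orbit of \(s\) under the group \(\langle \pi_a : a\in\Sigma\rangle\), because inverses are realized by positive words. Restricting a group action to an orbit always gives a permutation action, in line with the Dixon--Mortimer remark cited before the lemma.
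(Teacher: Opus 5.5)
Your proof is correct. It agrees with the paper on two of the three steps: closure of $R$ under each letter, and injectivity of $\pi_a|_R$ inherited from $\pi_a$. The difference lies in how you obtain surjectivity onto $R$.

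The paper settles surjectivity in one line with the pigeonhole principle: $\pi_a|_R$ is an injective self-map of the finite set $R$, hence surjective. That argument needs no group theory at all.

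You instead use that $\pi_a$ has finite order $t$ in $\mathrm{Sym}(Q)$. Then $\pi_a^{-1}=\pi_{a^{t-1}}$ is realized by a positive word, and you exhibit the explicit preimage $s\cdot(wa^{t-1})\in R$. This is correct even for $t=1$, where $a^{0}$ is the empty word. Your route is slightly longer but constructive, and it gives a bit more. It shows directly that $R$ is closed under the inverse action, so $R$ is exactly the orbit of $s$ under $\langle \pi_a : a\in\Sigma\rangle$. The paper only cites this orbit-restriction viewpoint from Dixon--Mortimer. Your argument is also the same mechanism that underlies Lemma~\ref{lem:positive-words-generate-group}: inverses are positive powers.

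Both arguments ultimately rest on finiteness. The paper uses finiteness of $R$; yours uses finiteness of the order of $\pi_a$, which comes from finiteness of $Q$. You also explicitly check that the restricted automaton still accepts $L(A)$, which the paper leaves implicit.
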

\begin{proof}

Let \(A=(Q,\Sigma,\cdot,s,F)\) be a permutation automaton, and let
\(
R:=\{\,s\cdot w \mid w\in\Sigma^{*}\,\}
\)
be the set of states reachable from the initial state. The accessible part of
\(A\) is
\(
A_{\mathrm{acc}}=(R,\Sigma,\cdot|_{R\times\Sigma},s,F\cap R).
\)
We must show that, for every \(a\in\Sigma\), the map \(q\mapsto q\cdot a\)
restricts to a permutation of \(R\).

Fix \(a\in\Sigma\). Since \(A\) is a permutation automaton, the map
\(\delta_a:Q\to Q\) given by \(\delta_a(q)=q\cdot a\) is a bijection. First,
\(\delta_a\) maps \(R\) into \(R\): if \(q\in R\), then \(q=s\cdot w\) for
some \(w\in\Sigma^{*}\), and hence
\(
q\cdot a=(s\cdot w)\cdot a=s\cdot wa\in R.
\)
Thus the restricted map \(\delta_a|_R:R\to R\) is well defined. Moreover,
\(\delta_a|_R\) is injective, because \(\delta_a\) is injective on \(Q\).
Since \(R\) is finite, every injective self-map of \(R\) is surjective.
Therefore \(\delta_a|_R\) is a bijection of \(R\).

As this holds for every \(a\in\Sigma\), every letter acts as a permutation on
the state set of \(A_{\mathrm{acc}}\). Hence the accessible part of \(A\) is
again a permutation automaton. \qedsymbol
\end{proof}

\begin{lemma}
\label{lem:positive-words-generate-group}
Let \(X\) be a finite set and let \(c_1,\dots,c_r\in \mathrm{Sym}(X)\). Then
the submonoid of \(\mathrm{Sym}(X)\) generated by \(c_1,\dots,c_r\) is equal
to the subgroup \(\langle c_1,\dots,c_r\rangle\). Equivalently, every
element of \(\langle c_1,\dots,c_r\rangle\) is induced by some positive word
over \(c_1,\dots,c_r\).
\end{lemma}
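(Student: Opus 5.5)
Let \(M\subseteq \mathrm{Sym}(X)\) be the submonoid generated by \(c_1,\dots,c_r\), i.e.\ the set of all finite products \(c_{i_1}\cdots c_{i_\ell}\) with \(\ell\ge 0\). The empty product is the identity. The plan is to prove the two inclusions \(M\subseteq\langle c_1,\dots,c_r\rangle\) and \(\langle c_1,\dots,c_r\rangle\subseteq M\) separately.

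The first inclusion is immediate, because a subgroup is in particular closed under products and contains the identity. For the reverse inclusion, it suffices to show that \(M\) contains the inverse of each generator \(c_i\). Every element of \(\langle c_1,\dots,c_r\rangle\) is a finite product of generators and their inverses, and \(M\) is closed under products, so this places every such element in \(M\).

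To get \(c_i^{-1}\in M\), I will use that \(\mathrm{Sym}(X)\) is finite of order \(|X|!\). The powers \(c_i,c_i^2,c_i^3,\dots\) therefore cannot all be distinct. Choose \(a<b\) with \(c_i^{a}=c_i^{b}\), and cancel in the group to obtain \(c_i^{d}=\mathrm{id}\) with \(d=b-a\ge 1\). Then \(c_i^{-1}=c_i^{d-1}\), which is a product of \(d-1\ge 0\) copies of \(c_i\) and hence lies in \(M\). Alternatively, Lagrange's theorem gives \(c_i^{|X|!}=\mathrm{id}\) directly.

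The equivalent formulation needs one further step: every element of the group is induced by a \emph{positive} word. Any non-identity element of \(M\) already comes from a nonempty product. The identity comes from the nonempty word \(c_1^{d}\), where \(d\) is the order of \(c_1\), provided \(r\ge 1\). For \(r=0\) the statement about nonempty words must be read as vacuous or restricted to monoid words; I will note this degenerate case explicitly. This bookkeeping about the empty word is the only delicate point, and the rest of the argument is routine.
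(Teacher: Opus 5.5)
Your proof is correct and is essentially the paper's argument: the paper cites the fact that every nonempty finite subsemigroup of a group is a subgroup, and your step $c_i^{-1}=c_i^{d-1}$ is the standard proof of that fact, applied to the generators. The empty-word bookkeeping is not needed, because the paper's ``positive words'' are monoid words: it uses the lemma to obtain words in $\{a,b\}^{*}$, and this is the reading under which the two formulations are literally equivalent. Under your nonempty-word reading, the case $r=0$ would actually make the claim false rather than vacuous, but that case never arises.
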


We now record the operation-specific construction used later.

\begin{lemma}[Right quotient by changing the final set]
\label{lem:quotient-final-set}
Let \(A=(Q,\Sigma,\cdot,s,F)\) and \(B\) be DFAs over the same alphabet. Define
\[
\widetilde{F} := \{\, q \in Q \mid \exists w \in L(B): q \cdot w \in F \,\},
\]
and let \(A/B := (Q,\Sigma,\cdot,s,\widetilde{F})\). Then
\[
L(A/B) = L(A)L(B)^{-1}.
\]
If \(A\) is a permutation automaton and \(\pi_{w}\) denotes the permutation of
\(Q\) induced by \(w\), then
\[
\widetilde{F}
   = \bigcup_{w \in L(B)} \pi_{w}^{-1}(F).
\]
In particular, if \(G_{B} := \{\, \pi_{w} \mid w \in L(B) \,\}\), then
\[
\widetilde{F}
   = \bigcup_{\pi \in G_{B}} \pi^{-1}(F).
\]
\end{lemma}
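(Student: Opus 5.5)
The plan is to prove the three assertions of Lemma~\ref{lem:quotient-final-set} in turn, using only the definitions of acceptance and of right quotient.

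\emph{Language equality.} Since \(A/B\) has the same transition function and initial state as \(A\), for any \(x\in\Sigma^*\) the state reached is \(s\cdot x\) in both automata. Then
\[
x\in L(A/B)\iff s\cdot x\in\widetilde F\iff \exists w\in L(B):\ (s\cdot x)\cdot w\in F .
\]
Because the transition action is compatible with concatenation, \((s\cdot x)\cdot w=s\cdot xw\). The last condition is therefore \(\exists w\in L(B): xw\in L(A)\), which is exactly \(x\in L(A)L(B)^{-1}\). This holds for every DFA \(A\); determinism and completeness guarantee that \(s\cdot x\) is defined.

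\emph{Permutation form.} Now assume \(A\) is a permutation automaton. Each \(\pi_w\) is a bijection of \(Q\), so \(\pi_w^{-1}(F)=\{q\in Q\mid q\pi_w\in F\}=\{q\mid q\cdot w\in F\}\). Hence
\[
q\in\widetilde F\iff \exists w\in L(B):\ q\in\pi_w^{-1}(F),
\]
which gives \(\widetilde F=\bigcup_{w\in L(B)}\pi_w^{-1}(F)\). Here \(\pi_w^{-1}(F)\) is read as a preimage, so bijectivity is not even essential for this step. It is, however, what makes the notation \(\pi^{-1}\) meaningful as the inverse permutation.

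\emph{Group form.} The final formula is a reindexing. The family \(\{\pi_w^{-1}(F)\}_{w\in L(B)}\) depends on \(w\) only through \(\pi_w\). Its union therefore equals the union of \(\pi^{-1}(F)\) over the image set \(G_B=\{\pi_w\mid w\in L(B)\}\).

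There is no real obstacle. The only point needing care is the convention that \(\pi_w\) acts on the right, with \(q\pi_w=q\cdot w\) and \(\pi_{xw}=\pi_x\pi_w\). This is needed so that the identity \((s\cdot x)\cdot w=s\cdot xw\) and the direction of the preimage are stated consistently.
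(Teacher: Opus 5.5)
Your proposal is correct and follows the paper's proof essentially step for step: the same chain of equivalences via \((s\cdot x)\cdot w=s\cdot xw\) for the language equality, the preimage characterization \(q\in\pi_w^{-1}(F)\iff q\cdot w\in F\) for the permutation form, and the reindexing of the union over \(G_B\). Your side remark that bijectivity is not actually needed for the preimage step, if \(\pi_w^{-1}(F)\) is read as a preimage, is accurate and slightly sharpens the paper's presentation.
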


\begin{proof}
For every \(x\in\Sigma^{*}\), we have
\begin{align*}
x\in L(A/B)
 &\iff s\cdot x\in \widetilde{F} \\
 &\iff \exists w\in L(B):\ (s\cdot x)\cdot w\in F \\
 &\iff \exists w\in L(B):\ s\cdot xw\in F \\
 &\iff \exists w\in L(B):\ xw\in L(A) \\
 &\iff x\in L(A)L(B)^{-1}.
\end{align*}
Thus \(L(A/B)=L(A)L(B)^{-1}\).

Now assume that \(A\) is a permutation automaton. Then for every
\(q\in Q\) and \(w\in\Sigma^{*}\), we have \(q\cdot w=q\pi_{w}\). Hence
\begin{align*}
q\in\widetilde{F}
 &\iff \exists w\in L(B):\ q\cdot w\in F \\
 &\iff \exists w\in L(B):\ q\pi_{w}\in F \\
 &\iff q\in \bigcup_{w\in L(B)} \pi_{w}^{-1}(F).
\end{align*}
Therefore
\[
\widetilde{F}=\bigcup_{w\in L(B)} \pi_{w}^{-1}(F).
\]
Replacing the indexing set of words by the set \(G_{B}\) of induced
permutations does not change this union, so also
\[
\widetilde{F}=\bigcup_{\pi\in G_{B}} \pi^{-1}(F).
\]\qedbox
\end{proof}

We also use standard group-action notation. For \(n \ge 1\), let \(S_{n}\)
denote the symmetric group on \([n]\). If \(G \le S_{n}\), then \(G\) acts on
\([n]\) by \(i\pi = \pi(i)\) for \(i \in [n]\) and \(\pi \in G\), and we write
\(iG := \{\, i\pi \mid \pi \in G \,\}\) for the corresponding orbit. We also
write \(\operatorname{Stab}_{G}(i) := \{\, \pi \in G \mid i\pi = i \,\}\) for the
point stabilizer of \(i\). Whenever letters of an input alphabet are
interpreted as permutations, we compose them as functions. Thus, if
\(w=c_{1}c_{2}\cdots c_{r}\), then the permutation induced by \(w\) is
\(\pi_{w}=c_{r}\circ \cdots \circ c_{2}\circ c_{1}\), so that
\(q\pi_{w}=q\cdot w\) for every state \(q\).

We also use the standard facts that, for \(n\ge 2\), the cycle
\((1\,2\,\dots\,n)\) together with the transposition \((1\,2)\) generates
\(S_n\), that the natural action of \(S_n\) on \([n]\) is transitive, and
that the point stabilizer \(\operatorname{Stab}_{S_k}(k)\) acts transitively
on \([k-1]\); see, for example,
Dixon--Mortimer~\cite[Chapter~1]{DixonMortimer1996}. For the convenience of
the reader, Appendix~\ref{app:expanded-stabilizer-proof} gives an expanded
elementary proof of Lemma~\ref{lem:quotient-accepted-stabilizer}.

\section{Exact Spectrum of Right Quotient}
\label{sec:quotient}

We now determine the exact accepting-state spectrum for right quotient over
permutation automata. Part of the positive range is already known in the unary
setting. We begin by isolating the only unavoidable value when both input
languages are nonempty.

\begin{proposition}
\label{prop:quotient-zero-impossible}
For all \(m,n \ge 1\), we have \(0 \notin g^{\asc}_{-1,\mathrm{PFA}}(m,n)\).
\end{proposition}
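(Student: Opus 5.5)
We argue by contradiction. Suppose \(0 \in g^{\asc}_{-1,\mathrm{PFA}}(m,n)\) for some \(m,n\ge 1\). Then there are an alphabet \(\Sigma\) and languages \(K,L\subseteq\Sigma^*\), accepted by permutation automata, with \(\asc(K)=m\), \(\asc(L)=n\) and \(\asc(KL^{-1})=0\). A regular language has accepting-state complexity \(0\) exactly when it is empty. Indeed, a DFA with no final states accepts \(\emptyset\), and conversely any DFA for a nonempty language needs at least one final state. So it suffices to show that \(KL^{-1}\neq\emptyset\). For the same reason, \(m\ge1\) and \(n\ge1\) give \(K\neq\emptyset\) and \(L\neq\emptyset\).

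Fix a permutation automaton \(A=(Q,\Sigma,\cdot,s,F)\) with \(L(A)=K\). Since \(K\neq\emptyset\), we may pick \(u\in K\), so that \(s\cdot u\in F\) and in particular \(F\neq\emptyset\). Since \(L\neq\emptyset\), we may pick \(y\in L\). The word \(y\) induces a permutation \(\pi_y\) of \(Q\). Its preimage \(\pi_y^{-1}(F)\) is nonempty, and we choose \(q\) in it, so that \(q\cdot y\in F\).

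The remaining issue is that \(q\) must be reachable from \(s\), and this is where bijectivity is used a second time. By Lemma~\ref{lem:accessible-part-permutation}, we may replace \(A\) by its accessible part without changing \(K\). It is cleaner, though, to choose \(q\) directly as a reachable state. Let \(f:=s\cdot u\in F\), which is reachable. The permutation \(\pi_y\) has finite order \(r\ge1\), so \(\pi_y^{\,r}=\mathrm{id}\). Put \(x:=u\,y^{\,r-1}\). Then
\[
s\cdot xy = s\cdot u\,y^{\,r} = f\,\pi_y^{\,r} = f\in F,
\]
so \(xy\in K\) with \(y\in L\). Hence \(x\in KL^{-1}\) and \(KL^{-1}\neq\emptyset\).

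Equivalently, in terms of Lemma~\ref{lem:quotient-final-set}, the state \(f\pi_y^{\,r-1}=f\pi_y^{-1}\) lies in \(\pi_y^{-1}(F)\subseteq\widetilde F\) and is reachable. This gives \(\asc(KL^{-1})\ge1\), contradicting \(\asc(KL^{-1})=0\).

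The main obstacle is exactly the reachability step. For general DFAs the preimage of \(F\) under \(y\) may contain no reachable state, which is why \(0\) is attainable for ordinary DFAs. The finite-order argument, namely \(y^{r-1}\) acting as \(\pi_y^{-1}\), is what settles it for permutation automata. The remaining steps (the \(\asc=0\iff\) empty equivalence and the choice of witnesses) are routine.
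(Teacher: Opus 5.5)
Your proof is correct, but it reaches a reachable preimage of a final state by a different route from the paper.

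The paper first replaces the automaton for \(K\) by its accessible part. It invokes Lemma~\ref{lem:accessible-part-permutation} to know this is still a permutation automaton. It then uses surjectivity of \(q\mapsto q\cdot w\) on that reachable state set to find a state \(q\) with \(q\cdot w=f\in F\); this \(q\) is automatically reachable. It concludes through the quotient final set \(\widetilde F\) of Lemma~\ref{lem:quotient-final-set}.

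You instead keep the original automaton and start from the specific reachable final state \(f=s\cdot u\). You realize \(\pi_y^{-1}\) as the positive power \(\pi_y^{\,r-1}\), which yields the explicit witness \(x=u\,y^{\,r-1}\in KL^{-1}\). This is the same finite-order fact that underlies Lemma~\ref{lem:positive-words-generate-group}, used in place of the accessible-part lemma.

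Your version is more self-contained and constructive. It never needs the restriction to reachable states or the set \(\widetilde F\), and it shows that \(u\,y^{\,r-1}\in KL^{-1}\) for every \(u\in K\) and \(y\in L\). The paper's version fits directly into the final-set framework it reuses for the high-range witnesses.

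Both arguments rest on the same finiteness-plus-injectivity observation, packaged differently. The contradiction framing in your first paragraph is harmless but unnecessary, since you prove \(KL^{-1}\neq\emptyset\) directly.
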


\begin{proof}
Let \(K,L \subseteq \Sigma^{*}\) be languages accepted by permutation automata
and satisfying \(\asc(K)=m\ge 1\) and \(\asc(L)=n\ge 1\). Then
\(K \neq \emptyset\) and \(L \neq \emptyset\). Let
\(A=(Q,\Sigma,\cdot,s,F)\) be the accessible part of a permutation automaton
accepting \(K\). By
Lemma~\ref{lem:accessible-part-permutation}, \(A\) is again a permutation
automaton. Let \(B\) be a permutation automaton accepting \(L\).

Since \(K\neq\emptyset\), the set \(F\) is nonempty; choose \(f \in F\).
Since \(L\neq\emptyset\), choose \(w \in L(B)\). Because \(A\) is a
permutation automaton, the map \(q \mapsto q \cdot w\) is a bijection on
\(Q\). Hence there exists a state \(q \in Q\) such that \(q \cdot w = f\).

By Lemma~\ref{lem:quotient-final-set}, the accepting set of the quotient
automaton \(A/B\) is
\(\widetilde{F} = \{\, r \in Q \mid r \cdot u \in F \text{ for some } u \in L(B) \,\}\).
Therefore \(q \in \widetilde{F}\), and so \(\widetilde{F}\neq\emptyset\).

Since \(A\) is accessible, there exists some \(x \in \Sigma^{*}\) such that
\(s \cdot x = q\). Thus \(x\) is accepted by \(A/B\). By
Lemma~\ref{lem:quotient-final-set}, this means
\(x \in L(A)L(B)^{-1} = KL^{-1}\). Hence \(KL^{-1}\neq\emptyset\). Since
the empty language is the only language of accepting-state complexity \(0\), it
follows that \(\asc(KL^{-1}) \neq 0\). Therefore
\(0 \notin g^{\asc}_{-1,\mathrm{PFA}}(m,n)\). \qedbox
\end{proof}

For the positive part of the spectrum, we split the construction into a known
low range and a new high range. The low range is already available from the
unary witnesses of Rauch and Holzer.

\begin{proposition}[{\cite[Corollary~3.30]{RauchHolzer2023}}]
\label{prop:quotient-unary-range}
For all \(m,n \ge 1\) and all \(\alpha \in \{1,\dots,mn\}\), we have
\[
\alpha \in g^{\asc}_{-1,\mathrm{PFA}}(m,n).
\]
\end{proposition}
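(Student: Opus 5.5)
Proposition~\ref{prop:quotient-unary-range} is \cite[Corollary~3.30]{RauchHolzer2023}, so within the paper it can simply be cited. The following is the independent unary argument I would give instead; I have not checked every minimality claim. Over \(\Sigma=\{a\}\), every accessible permutation automaton is a cycle \(\mathbb{Z}_p\) with initial state \(0\) and final set \(S\subseteq\mathbb{Z}_p\). Such a cycle is minimal if and only if \(S\) is \emph{aperiodic}, meaning \(S+t\neq S\) for every \(t\not\equiv 0\). By Proposition~\ref{prop:minimal-realizes-asc}, the language then has accepting-state complexity \(|S|\).

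I would take \(K\) to be given by a cycle of length \(p\) with aperiodic final set \(F\), \(|F|=m\). I would take \(L\) to be given by a cycle of length \(q\) with aperiodic final set \(D\), \(|D|=n\). Let \(g=\gcd(p,q)\). The lengths of words in \(L\) are exactly the integers in \(D+q\mathbb{Z}\), which reduce modulo \(p\) to \(D+g\mathbb{Z}\). By Lemma~\ref{lem:quotient-final-set}, the quotient final set is therefore \(\widetilde F=F-D+g\mathbb{Z}\subseteq\mathbb{Z}_p\). This set is \(g\)-periodic, so \(KL^{-1}\) is also accepted by the \(g\)-cycle with final set \(\bar F-\bar D\subseteq\mathbb{Z}_g\), where bars denote reduction modulo \(g\). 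If \(g\) is prime and \(\bar F-\bar D\) is a proper nonempty subset of \(\mathbb{Z}_g\), then it is aperiodic. Consequently \(\asc(KL^{-1})=|\bar F-\bar D|\).

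The task is now additive combinatorics in \(\mathbb{Z}_g\), and I would fix a prime \(g>mn\) so that no wrap-around occurs. The cases are as follows.
\begin{itemize}
\item If \(\alpha\le m\), I would take \(\bar F=\{0,\dots,\alpha-1\}\) and \(\bar D=\{0\}\).
\item If \(\alpha=mb+c\) with \(1\le b\le n-1\) and \(0\le c\le m\), I would take \(\bar F=\{0,\dots,m-1\}\) and \(\bar D=\{0,m,2m,\dots,(b-1)m\}\). When \(c>0\), I would add the point \((b-1)m+c\) to \(\bar D\).
\end{itemize}
In the second case the difference set is a union of consecutive blocks of length \(m\), plus a partial overlapping block. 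Its size is exactly \(mb+c\), and it stays below \(g\), hence is proper. Every \(\alpha\in[1,mn]\) is covered by one of these cases.

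Next I lift \(\bar F\) and \(\bar D\) to \(F\subseteq\mathbb{Z}_p\) and \(D\subseteq\mathbb{Z}_q\) of exact sizes \(m\) and \(n\). I would take \(p=gr\) and \(q=gr'\) with \(\gcd(r,r')=1\) and \(r,r'\) large, for instance distinct large primes different from \(g\). Each residue modulo \(g\) has \(r\) lifts in \(\mathbb{Z}_p\), so I can choose \(m\) distinct elements of \(\mathbb{Z}_p\) whose reductions cover \(\bar F\), repeating residues where \(|\bar F|<m\). The same is done for \(D\).

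The step I expect to be the main obstacle is guaranteeing that these lifts are aperiodic in \(\mathbb{Z}_p\) and \(\mathbb{Z}_q\), because that is what gives \(\asc(K)=m\) and \(\asc(L)=n\) exactly. My first attempt would use the fact that a set invariant under a shift by a nontrivial divisor period \(d\mid p\) is a union of cosets of \(\langle d\rangle\), whose sizes are at least the smallest prime factor of \(p\). So if \(g,r,r'\) are all primes exceeding \(mn\), no set of size \(m\le mn\) can be periodic. The same argument applies to \(D\) in \(\mathbb{Z}_q\) and to \(\bar F-\bar D\) in \(\mathbb{Z}_g\), which would finish the proof.
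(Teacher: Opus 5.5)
Your argument is correct. It takes a genuinely different route from the paper, which gives no construction at all. The paper simply invokes \cite[Corollary~3.30]{RauchHolzer2023} and observes that unary witnesses are admissible, because the definition of $g^{\asc}_{-1,\mathrm{PFA}}$ quantifies over the alphabet.

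You instead reprove the unary range from the paper's own tools:
\begin{itemize}
\item Lemma~\ref{lem:quotient-final-set} turns the quotient of two cycles into the difference set $\bar F-\bar D$ in $\mathbb{Z}_{\gcd(p,q)}$.
\item Choosing $g,r,r'$ to be primes larger than $mn$ makes all three minimality checks automatic. A nonempty shift-invariant subset of $\mathbb{Z}_p$ is a union of cosets of a nontrivial subgroup, so it has size at least the least prime factor of $p$, while $1\le m,n,\alpha\le mn$. Your hedge about unchecked minimality claims is therefore unnecessary.
\item The same choice also guarantees $r\ge m$ and $r'\ge n$. The lifting step needs this when $|\bar F|<m$ or $|\bar D|<n$, for example when $\alpha=1$.
\item Your interval/progression choice of $\bar F,\bar D$ does give $|\bar F-\bar D|=mb+c$ with $|\bar D|\le n$ and no wrap-around, since $\alpha\le mn<g$. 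The two cases together cover $[1,mn]$, including the degenerate case $n=1$.
\end{itemize}

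What each route buys: the citation buys brevity. Your version makes the paper self-contained and independent of the exact formulation in the cited source. As a by-product, the inequality $|\bar F-\bar D|\le|F|\,|D|$ also explains the matching unary upper bound $mn$.
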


\begin{proof}
By \cite[Corollary~3.30]{RauchHolzer2023}, every value in
\(\{1,\dots,mn\}\) is realized already by unary permutation automata. These
unary witnesses are also valid in the unrestricted alphabet setting. \qedbox
\end{proof}

Since \(n\ge 1\), we have \(mn \ge m\). Consequently, once we realize every
\(\alpha \ge m\), the low range \(\{1,\dots,mn\}\) and the high range
\(\{\alpha \in \mathbb{N} \mid \alpha \ge m\}\) together cover all positive
integers.

We therefore fix integers \(m,n \ge 1\) and \(\alpha \ge m\), and set
\(k := \alpha+1\). We now introduce the witness pair for the high range.

Let the alphabet be \(\Sigma := \{a,b,c\}\). On \([k]\), let
\(a=(1\,2\,\dots\,k)\), \(b=(1\,2)\), and \(c=\mathrm{id}_{[k]}\).
Since \(a\) and \(b\) generate \(S_k\)~\cite[Chapter~1]{DixonMortimer1996},
these letters induce a permutation action on \([k]\).

We first define the quotient source automaton
\[
A^{\mathrm{q}}_{m,\alpha}
   := \bigl([k],\Sigma,\cdot_A,1,F_A\bigr),
\qquad
F_A := \{1,\dots,m\},
\]
where
\[
p\cdot_A a = pa,
\qquad
p\cdot_A b = pb,
\qquad
p\cdot_A c = p
\qquad (p \in [k]).
\]
Because \(\alpha \ge m\), we have \(m \le k-1\), and hence
\(F_A \subseteq [k-1]\).
In particular, the distinguished point \(k\) is nonfinal in
\(A^{\mathrm{q}}_{m,\alpha}\).

For the second witness, let \(\mathbb{Z}_{n+1}\) be written additively.
Define
\[
B^{\mathrm{q}}_{n,\alpha}
   := \bigl([k]\times \mathbb{Z}_{n+1},\Sigma,\cdot_B,(k,0),F_B\bigr),
\]
where
\[
F_B := \{\, (k,i) \mid i \in \mathbb{Z}_{n+1}\setminus\{n\} \,\},
\]
and the transition function is given by
\[
(p,i)\cdot_B a = (pa,i),
\qquad
(p,i)\cdot_B b = (pb,i),
\qquad
(p,i)\cdot_B c = (p,i+1)
\]
for all \((p,i)\in [k]\times \mathbb{Z}_{n+1}\), where the addition in the
second coordinate is modulo \(n+1\).

Thus \(a\) and \(b\) act only on the first coordinate, while \(c\) fixes the
first coordinate and cycles the second coordinate through the \(n+1\) residues.
The automaton \(B^{\mathrm{q}}_{n,\alpha}\) is designed so that the
permutations of \([k]\) induced by words of \(L(B^{\mathrm{q}}_{n,\alpha})\)
are exactly the permutations fixing \(k\).

In the remainder of this section we prove that
\[
\asc(L(A^{\mathrm{q}}_{m,\alpha})) = m,
\qquad
\asc(L(B^{\mathrm{q}}_{n,\alpha})) = n,
\qquad
\asc\!\bigl(L(A^{\mathrm{q}}_{m,\alpha})L(B^{\mathrm{q}}_{n,\alpha})^{-1}\bigr)
= \alpha.
\]

We first verify that the witness automata themselves are minimal.

\begin{lemma}
\label{lem:quotient-witnesses-minimal}
The automata \(A^{\mathrm{q}}_{m,\alpha}\) and \(B^{\mathrm{q}}_{n,\alpha}\) are
minimal permutation automata. Moreover,
\[
|F_A|=m,
\qquad
|F_B|=n,
\]
and hence
\[
\asc(L(A^{\mathrm{q}}_{m,\alpha}))=m,
\qquad
\asc(L(B^{\mathrm{q}}_{n,\alpha}))=n.
\]
\end{lemma}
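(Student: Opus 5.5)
The plan is to prove the lemma in three steps: each automaton is a permutation automaton, each is minimal (accessible and reduced), and the sizes of the final sets are as claimed. The equalities for \(\asc\) then follow from Proposition~\ref{prop:minimal-realizes-asc}.

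\emph{Permutation property and cardinalities.} Every letter acts bijectively on the state set in both automata. In \(A^{\mathrm q}_{m,\alpha}\), \(a\) and \(b\) are permutations of \([k]\) and \(c\) is the identity. In \(B^{\mathrm q}_{n,\alpha}\), \(a\) and \(b\) act as permutations on the first coordinate only, and \(c\) is the bijection \(i\mapsto i+1\) on \(\mathbb{Z}_{n+1}\) applied to the second coordinate. The cardinalities are immediate: \(|F_A|=|\{1,\dots,m\}|=m\), and \(m\le k\) since \(m\le\alpha<k\). Also \(|F_B|=|\mathbb{Z}_{n+1}\setminus\{n\}|=n\).

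\emph{Accessibility.} In \(A^{\mathrm q}_{m,\alpha}\), the state \(1\cdot a^{j}=j+1\), so every state is reachable. In \(B^{\mathrm q}_{n,\alpha}\), the state \((p,i)\) is reached from \((k,0)\) by a word in \(\{a\}^*\) taking \(k\) to \(p\), followed by \(c^{i}\).

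\emph{Distinguishability.} This is the main step.
\begin{itemize}
\item For \(A^{\mathrm q}_{m,\alpha}\): if \(m=k\) every state is final and \(A\) would not be minimal. This case cannot occur, since \(m\le k-1\), so \(F_A\) is a nonempty proper subset of \([k]\). Take states \(p\neq q\). Because the group \(\langle a,b\rangle=S_k\) acts on \([k]\), we can choose \(\sigma\in S_k\) with \(p\sigma=1\) and \(q\sigma=k\); this is possible since \(S_k\) is \(2\)-transitive for \(k\ge 2\). By Lemma~\ref{lem:positive-words-generate-group}, \(\sigma\) is induced by a positive word \(w\) over \(\{a,b\}\). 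Then \(p\cdot w\in F_A\) and \(q\cdot w\notin F_A\).
\item For \(B^{\mathrm q}_{n,\alpha}\): take \((p,i)\neq(p',i')\).
  \begin{itemize}
  \item If \(p\neq p'\), choose a word \(w\) over \(\{a,b\}\) with \(p\cdot w=k\) and \(p'\cdot w\neq k\), then append \(c^{j}\) so that \(i+j\not\equiv n\). The second state then remains outside \(F_B\) because its first coordinate is not \(k\).
  \item If \(p=p'\) and \(i\neq i'\), first apply a word over \(\{a,b\}\) moving \(p\) to \(k\), then apply \(c^{j}\) with \(i'+j\equiv n\). The \((i',\cdot)\) copy lands in the unique nonfinal state \((k,n)\) of column \(k\), while the \((i,\cdot)\) copy lands in \((k,i+j)\) with \(i+j\neq n\), which is final.
  \end{itemize}
\end{itemize}

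The only subtlety I anticipate is the \(2\)-transitivity used for \(A\). For \(k\ge2\) it follows directly from \(\langle a,b\rangle=S_k\). The edge case \(k=2\) (\(\alpha=m=1\)) is handled by \(b\) alone. Since the automata are complete, accessible and reduced, they are minimal. Proposition~\ref{prop:minimal-realizes-asc} then gives \(\asc(L(A^{\mathrm q}_{m,\alpha}))=m\) and \(\asc(L(B^{\mathrm q}_{n,\alpha}))=n\).
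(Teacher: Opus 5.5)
Your proposal is correct and follows essentially the same route as the paper: you check the permutation property and reachability, then distinguish states by a word over \(\{a,b\}\) that sends a chosen state to \(1\) or \(k\), followed by a suitable power of \(c\), and conclude with Proposition~\ref{prop:minimal-realizes-asc}. The differences are cosmetic. You reach states by explicit powers of \(a\) rather than by transitivity of \(S_k\). In the \(p\neq p'\) case for \(B^{\mathrm q}_{n,\alpha}\), you append a single shift \(c^{j}\) with \(i+j\not\equiv n\) instead of splitting into the cases \(i\neq n\) and \(i=n\).
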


\begin{proof}
Each letter acts bijectively on the state set of \(A^{\mathrm{q}}_{m,\alpha}\):
the letters \(a\) and \(b\) act as permutations of \([k]\), and \(c\) acts as
the identity. Hence \(A^{\mathrm{q}}_{m,\alpha}\) is a permutation automaton.
Likewise, each letter acts bijectively on the state set
\([k]\times \mathbb{Z}_{n+1}\) of \(B^{\mathrm{q}}_{n,\alpha}\): the letters
\(a\) and \(b\) permute the first coordinate and fix the second, while \(c\)
fixes the first coordinate and acts as the cycle
\(i \mapsto i+1 \pmod{n+1}\) on the second. Thus
\(B^{\mathrm{q}}_{n,\alpha}\) is also a permutation automaton.

By Lemma~\ref{lem:positive-words-generate-group}, every permutation in
\(\langle a,b\rangle=S_k\) is induced by some word in \(\{a,b\}^{*}\).

For \(A^{\mathrm{q}}_{m,\alpha}\), we clearly have
\(|F_A| = |\{1,\dots,m\}| = m\). Since the natural action of \(S_k\) on
\([k]\) is transitive,
every state is reachable from the initial state \(1\). To show
distinguishability, let \(p,q \in [k]\) with \(p \neq q\). Because
\(\langle a,b\rangle = S_k\), there exists a permutation \(\sigma \in S_k\)
such that \(p\sigma = 1\) and \(q\sigma = k\). Let
\(w \in \{a,b\}^{*}\) induce \(\sigma\). Then
\(p \cdot_A w = 1 \in F_A\) and \(q \cdot_A w = k \notin F_A\),
since \(F_A \subseteq [k-1]\). Hence \(p\) and \(q\) are distinguishable, and
therefore \(A^{\mathrm{q}}_{m,\alpha}\) is minimal.

For \(B^{\mathrm{q}}_{n,\alpha}\), the set of final states is
\(F_B=\{\, (k,i) \mid i \in \mathbb{Z}_{n+1}\setminus\{n\} \,\}\), so
\(|F_B| = n\). Every state \((p,i)\in [k]\times \mathbb{Z}_{n+1}\) is
reachable from
\((k,0)\): choose a word \(x \in \{a,b\}^{*}\) sending \(k\) to \(p\), and then
append \(c^{r}\), where \(0 \le r \le n\) is chosen so that
\(r \equiv i \pmod{n+1}\). Then \((k,0)\cdot_B xc^{r} = (p,i)\).

It remains to distinguish distinct states of \(B^{\mathrm{q}}_{n,\alpha}\).
Let \((p,i)\neq(q,j)\).

If \(p \neq q\), choose \(\sigma \in S_k\) such that \(p\sigma = k\) and
\(q\sigma \neq k\), and let \(x \in \{a,b\}^{*}\) induce \(\sigma\). If
\(i \neq n\), then \((p,i)\cdot_B x = (k,i)\in F_B\) and
\((q,j)\cdot_B x = (q\sigma,j)\notin F_B\),
because the first coordinate of the second state is not \(k\). If \(i=n\), then
\((p,i)\cdot_B xc = (k,0)\in F_B\) and
\((q,j)\cdot_B xc = (q\sigma,j+1)\notin F_B\),
again because \(q\sigma \neq k\). Thus the two states are distinguishable.

If \(p=q\), then necessarily \(i\neq j\). Choose \(\tau \in S_k\) with
\(p\tau = k\), and let \(x \in \{a,b\}^{*}\) induce \(\tau\). Choose an integer
\(r \in \{0,\dots,n\}\) such that \(j+r \equiv n \pmod{n+1}\). Then
\((q,j)\cdot_B xc^{r} = (k,n)\notin F_B\). Since \(i\neq j\), we have
\(i+r \not\equiv n \pmod{n+1}\), and therefore
\((p,i)\cdot_B xc^{r} = (k,i+r)\in F_B\).
So these two states are also distinguishable.

Hence all states of \(B^{\mathrm{q}}_{n,\alpha}\) are reachable and pairwise
distinguishable, and \(B^{\mathrm{q}}_{n,\alpha}\) is minimal.

The final assertions now follow from
Proposition~\ref{prop:minimal-realizes-asc}. \qedbox
\end{proof}

For each word \(w\in\Sigma^{*}\), let \(\rho_w\in S_k\) denote the
permutation induced by \(w\) on \([k]\). Equivalently, \(\rho_w\) is the action
of \(w\) in \(A^{\mathrm{q}}_{m,\alpha}\), and also the action of \(w\) on the
first coordinate of \(B^{\mathrm{q}}_{n,\alpha}\). Define
\(G_{n,\alpha} := \{\, \rho_w \mid w \in L(B^{\mathrm{q}}_{n,\alpha}) \,\}\).
The next lemma is the group-theoretic core of the high-range quotient
construction. Appendix~\ref{app:expanded-stabilizer-proof} gives an expanded
elementary proof of Lemma~\ref{lem:quotient-accepted-stabilizer}, unfolding
the realization of stabilizer elements via adjacent transpositions.

\begin{lemma}
\label{lem:quotient-accepted-stabilizer}
We have \[G_{n,\alpha} = \operatorname{Stab}_{S_k}(k).\]
\end{lemma}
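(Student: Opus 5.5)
The plan is to prove the two inclusions separately, using that a word \(w\) moves the initial state \((k,0)\) of \(B^{\mathrm{q}}_{n,\alpha}\) to \((k\rho_w,\,|w|_c \bmod (n+1))\). This holds because \(a,b\) act only on the first coordinate as \(\rho_a,\rho_b\), while \(c\) fixes the first coordinate and adds \(1\) to the second. Consequently \(w\in L(B^{\mathrm{q}}_{n,\alpha})\) if and only if \(k\rho_w=k\) and \(|w|_c\not\equiv n \pmod{n+1}\). I would record this characterization first, by a straightforward induction on \(|w|\).

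For the inclusion \(G_{n,\alpha}\subseteq\operatorname{Stab}_{S_k}(k)\), let \(w\in L(B^{\mathrm{q}}_{n,\alpha})\). By the characterization, \(k\rho_w=k\), so \(\rho_w\) fixes \(k\) and hence lies in the stabilizer. Nothing more is needed.

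For the reverse inclusion, let \(\sigma\in\operatorname{Stab}_{S_k}(k)\). Since \(\langle a,b\rangle=S_k\), Lemma~\ref{lem:positive-words-generate-group} gives a word \(x\in\{a,b\}^{*}\) with \(\rho_x=\sigma\). This word has \(|x|_c=0\not\equiv n \pmod{n+1}\), because \(n\ge 1\). It also satisfies \(k\rho_x=k\sigma=k\). Hence \((k,0)\cdot_B x=(k,0)\in F_B\), so \(x\in L(B^{\mathrm{q}}_{n,\alpha})\) and \(\sigma=\rho_x\in G_{n,\alpha}\).

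There is no real obstacle here. The points needing care are the bookkeeping convention \(\pi_w=c_r\circ\cdots\circ c_1\), which must match \(q\cdot w=q\pi_w\) so that \(\rho_w\) is indeed the first-coordinate action, and the observation that \((k,0)\) is final precisely because \(0\neq n\) when \(n\ge1\). The \(c\)-letter plays no role in realizing stabilizer elements; its only purpose is to make \(|F_B|=n\), which was already handled in Lemma~\ref{lem:quotient-witnesses-minimal}.
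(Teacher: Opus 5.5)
Your proposal is correct and takes essentially the same route as the paper. For the forward inclusion you use that an accepted word returns the first coordinate to \(k\); for the reverse inclusion you realize each stabilizer element by a \(c\)-free word over \(\{a,b\}\) via Lemma~\ref{lem:positive-words-generate-group}, which keeps the second coordinate at \(0\) so that \((k,0)\in F_B\) is reached because \(n\ge 1\). Your explicit characterization of \(L(B^{\mathrm{q}}_{n,\alpha})\) in terms of \(k\rho_w\) and \(|w|_c\) is a correct but inessential strengthening that the paper never states.
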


\begin{proof}
Let \(w \in L(B^{\mathrm{q}}_{n,\alpha})\). Since \(w\) is accepted from the
initial state \((k,0)\), we have \((k,0)\cdot_B w = (k,i)\) for some
\(i \in \mathbb{Z}_{n+1}\setminus\{n\}\). In particular, the first coordinate
returns to \(k\). But the first coordinate after reading \(w\) is precisely
\(k\rho_w\). Hence \(k\rho_w = k\), so
\(\rho_w \in \operatorname{Stab}_{S_k}(k)\). This proves
\(G_{n,\alpha} \subseteq \operatorname{Stab}_{S_k}(k)\).

Conversely, let \(h \in \operatorname{Stab}_{S_k}(k)\). By
Lemma~\ref{lem:positive-words-generate-group}, there exists a word \(x \in \{a,b\}^{*}\) such that \(\rho_x = h\).
Because \(x\) contains no occurrence of \(c\), the second coordinate remains
\(0\), and therefore \((k,0)\cdot_B x = (kh,0) = (k,0)\in F_B\). Thus
\(x \in L(B^{\mathrm{q}}_{n,\alpha})\), so \(h=\rho_x \in G_{n,\alpha}\). Hence
\(\operatorname{Stab}_{S_k}(k) \subseteq G_{n,\alpha}\).

Therefore \(G_{n,\alpha} = \operatorname{Stab}_{S_k}(k)\). \qedbox
\end{proof}

Let \(A^{\mathrm{q}}_{m,\alpha}/B^{\mathrm{q}}_{n,\alpha}
   = \bigl([k],\Sigma,\cdot_A,1,\widetilde{F}\bigr)\) be the quotient automaton
from Lemma~\ref{lem:quotient-final-set}.

\begin{lemma}
\label{lem:quotient-computed-final-set}
The final set of the quotient automaton is
\[
\widetilde{F}
 = \bigcup_{h\in G_{n,\alpha}} h^{-1}(F_A)
 = \bigcup_{h\in \operatorname{Stab}_{S_k}(k)} h^{-1}(F_A)
 = \{1,\dots,k-1\}.
\]
\end{lemma}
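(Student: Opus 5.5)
The three equalities will be handled in order. The first is immediate from Lemma~\ref{lem:quotient-final-set}: \(A^{\mathrm q}_{m,\alpha}\) is a permutation automaton (Lemma~\ref{lem:quotient-witnesses-minimal}) whose transition action is exactly \(w\mapsto\rho_w\). Taking \(B=B^{\mathrm q}_{n,\alpha}\), the set \(G_B\) of that lemma is precisely \(G_{n,\alpha}\), so \(\widetilde F=\bigcup_{h\in G_{n,\alpha}}h^{-1}(F_A)\). The second equality is a direct substitution using Lemma~\ref{lem:quotient-accepted-stabilizer}, which gives \(G_{n,\alpha}=\operatorname{Stab}_{S_k}(k)\).

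The substantive step is the third equality, which I will prove by two inclusions. For \(\subseteq\), let \(h\in\operatorname{Stab}_{S_k}(k)\) and \(q\in h^{-1}(F_A)\), so \(qh\in F_A\subseteq[k-1]\). If \(q=k\), then \(qh=k\notin F_A\), a contradiction. Hence \(q\in[k-1]\). Here I use the observation, recorded in the construction, that \(m\le k-1\), so \(k\notin F_A\).

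For \(\supseteq\), let \(q\in[k-1]\). Since \(m\ge1\), the state \(1\in F_A\). I want \(h\in\operatorname{Stab}_{S_k}(k)\) with \(qh=1\), and I will take the transposition \(h=(q\,1)\), or the identity if \(q=1\). Since \(q,1\le k-1\), this \(h\) fixes \(k\), and \(q\in h^{-1}(F_A)\). Equivalently, this step is the transitivity of \(\operatorname{Stab}_{S_k}(k)\) on \([k-1]\) quoted in the preliminaries.

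There is no real obstacle; the only points needing care are the two hypotheses \(\alpha\ge m\), which ensures \(k\notin F_A\), and \(m\ge1\), which ensures \(F_A\ne\emptyset\). I will also be careful that \(h^{-1}(F_A)\) denotes the preimage \(\{q : qh\in F_A\}\), which matches the right-action convention \(q\pi_w=q\cdot w\) of the paper.
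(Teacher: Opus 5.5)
Your proposal is correct and follows the paper's proof essentially step for step: you obtain the first two equalities from Lemma~\ref{lem:quotient-final-set} and Lemma~\ref{lem:quotient-accepted-stabilizer}, and the third from the same two inclusions, using $k\notin F_A$ and $F_A\neq\emptyset$. The only difference is cosmetic: the paper invokes transitivity of $\operatorname{Stab}_{S_k}(k)$ on $[k-1]$ for an arbitrary $y\in F_A$, whereas you give the explicit witness $h=(q\,1)$ with target $1\in F_A$.
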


\begin{proof}
By Lemma~\ref{lem:quotient-final-set},
\[
\widetilde{F}
   = \bigcup_{w\in L(B^{\mathrm{q}}_{n,\alpha})} \rho_w^{-1}(F_A)
   = \bigcup_{h\in G_{n,\alpha}} h^{-1}(F_A).
\]
By Lemma~\ref{lem:quotient-accepted-stabilizer},
\[
\widetilde{F}
   = \bigcup_{h\in \operatorname{Stab}_{S_k}(k)} h^{-1}(F_A).
\]

We now compute this union. Since every \(h\in \operatorname{Stab}_{S_k}(k)\)
fixes \(k\), and since \(k \notin F_A\), we have
\(k \notin h^{-1}(F_A)\) for
\(h\in \operatorname{Stab}_{S_k}(k)\). Therefore
\(\widetilde{F}\subseteq \{1,\dots,k-1\}\).

For the reverse inclusion, let \(x \in \{1,\dots,k-1\}\). Since \(m\ge 1\), the
set \(F_A=\{1,\dots,m\}\) is nonempty; choose any \(y \in F_A\). The point
stabilizer \(\operatorname{Stab}_{S_k}(k)\) acts transitively on
\(\{1,\dots,k-1\}\), so there exists
\(h \in \operatorname{Stab}_{S_k}(k)\) such that \(xh = y\). Hence
\(x \in h^{-1}(F_A) \subseteq \widetilde{F}\). Thus
\(\{1,\dots,k-1\}\subseteq \widetilde{F}\).

We conclude that \(\widetilde{F}=\{1,\dots,k-1\}\). \qedbox
\end{proof}

\begin{lemma}
\label{lem:quotient-result-minimal}
The quotient automaton \(A^{\mathrm{q}}_{m,\alpha}/B^{\mathrm{q}}_{n,\alpha}\)
is minimal. In particular,
\[
\asc\!\bigl(L(A^{\mathrm{q}}_{m,\alpha})L(B^{\mathrm{q}}_{n,\alpha})^{-1}\bigr)
   = k-1 = \alpha.
\]
\end{lemma}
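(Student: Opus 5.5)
The plan is to verify minimality of the quotient automaton \(A^{\mathrm{q}}_{m,\alpha}/B^{\mathrm{q}}_{n,\alpha}=([k],\Sigma,\cdot_A,1,\widetilde{F})\) directly, by showing that all states are reachable and all pairs of states are distinguishable. We then read off the accepting-state complexity from Proposition~\ref{prop:minimal-realizes-asc}, using the computed final set \(\widetilde{F}=\{1,\dots,k-1\}\) from Lemma~\ref{lem:quotient-computed-final-set}.

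\emph{Reachability.} The quotient automaton has the same state set, transitions and initial state as \(A^{\mathrm{q}}_{m,\alpha}\); only the final set changes. Reachability therefore transfers verbatim from the proof of Lemma~\ref{lem:quotient-witnesses-minimal}. Since \(\langle a,b\rangle=S_k\) acts transitively on \([k]\), Lemma~\ref{lem:positive-words-generate-group} gives, for each \(p\in[k]\), a word in \(\{a,b\}^*\) sending \(1\) to \(p\).

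\emph{Distinguishability.} This repeats the argument for \(A^{\mathrm{q}}_{m,\alpha}\), now with the new final set. Let \(p\neq q\) in \([k]\). Since \(k\ge 2\), choose \(\sigma\in S_k\) with \(p\sigma=1\) and \(q\sigma=k\). Such a \(\sigma\) exists because \(S_k\) is \(2\)-transitive, or more concretely because one can extend the partial injection \(p\mapsto 1\), \(q\mapsto k\) to a bijection. Realize \(\sigma\) by a word \(w\in\{a,b\}^*\) via Lemma~\ref{lem:positive-words-generate-group}. Then \(p\cdot_A w=1\in\widetilde{F}\), because \(1\le k-1\) (as \(\alpha\ge m\ge1\)). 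On the other hand \(q\cdot_A w=k\notin\widetilde{F}\). Hence \(p\) and \(q\) are distinguishable, and the automaton is minimal.

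The conclusion is then that \(\asc\) of \(L(A^{\mathrm{q}}_{m,\alpha}/B^{\mathrm{q}}_{n,\alpha})\) equals \(|\widetilde{F}|=k-1=\alpha\). This language coincides with \(L(A^{\mathrm{q}}_{m,\alpha})L(B^{\mathrm{q}}_{n,\alpha})^{-1}\) by Lemma~\ref{lem:quotient-final-set}. No step is a real obstacle. The only point needing care is that \(1\in\widetilde{F}\) and \(k\notin\widetilde{F}\) both hold, which uses \(k\ge2\) and the exact form of \(\widetilde{F}\); this is already supplied by Lemma~\ref{lem:quotient-computed-final-set}.
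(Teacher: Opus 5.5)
Your proposal is correct and matches the paper's proof essentially step for step. Both use reachability from the transitivity of \(S_k=\langle a,b\rangle\) realized by positive words, and distinguishability via a permutation sending \(p\mapsto 1\), \(q\mapsto k\) tested against the final set \(\{1,\dots,k-1\}\). Both then conclude with Proposition~\ref{prop:minimal-realizes-asc} and Lemma~\ref{lem:quotient-final-set}.
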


\begin{proof}
By Lemma~\ref{lem:quotient-computed-final-set}, the quotient automaton is
\[
A^{\mathrm{q}}_{m,\alpha}/B^{\mathrm{q}}_{n,\alpha}
   = \bigl([k],\Sigma,\cdot_A,1,\{1,\dots,k-1\}\bigr).
\]

By Lemma~\ref{lem:positive-words-generate-group}, every permutation in
\(\langle a,b\rangle=S_k\) is induced by some word in \(\{a,b\}^{*}\). Since
the natural action of \(S_k\) on \([k]\) is transitive,
every state is reachable from the initial state \(1\).

Now let \(p,q \in [k]\) with \(p\neq q\). Choose \(\sigma \in S_k\) such that
\(p\sigma = 1\) and \(q\sigma = k\). Let \(w \in \{a,b\}^{*}\) induce
\(\sigma\). Then \(p\cdot_A w = 1 \in \{1,\dots,k-1\}\) and
\(q\cdot_A w = k \notin \{1,\dots,k-1\}\).
Hence \(p\) and \(q\) are distinguishable. Therefore the quotient automaton is
minimal.

By Lemma~\ref{lem:quotient-final-set}, this automaton accepts
\(L(A^{\mathrm{q}}_{m,\alpha})L(B^{\mathrm{q}}_{n,\alpha})^{-1}\). Since its final set
has size \(k-1\), Proposition~\ref{prop:minimal-realizes-asc} gives
\(\asc\!\bigl(L(A^{\mathrm{q}}_{m,\alpha})L(B^{\mathrm{q}}_{n,\alpha})^{-1}\bigr)
   = k-1 = \alpha\). \qedbox

\end{proof}

\begin{theorem}
\label{thm:quotient-spectrum}
For all \(m,n \ge 0\),
\[
g^{\asc}_{-1,\mathrm{PFA}}(m,n)=
\begin{cases}
\{0\}, & \text{if } m=0 \text{ or } n=0,\\[1mm]
\mathbb{N}_{>0}, & \text{if } m,n \ge 1.
\end{cases}
\]
In particular, the larger-alphabet right-quotient case left open by
Rauch and Holzer~\cite{RauchHolzer2023} is completely resolved.
\end{theorem}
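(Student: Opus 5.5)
The proof splits according to whether one of the parameters vanishes, and in the main case it assembles the results already established in this section.

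\emph{Case \(m=0\) or \(n=0\).} A regular language has accepting-state complexity \(0\) exactly when it is empty, since a DFA with no final states accepts nothing and conversely \(\emptyset\) is accepted by a one-state permutation automaton with no final states. So if \(m=0\) then \(K=\emptyset\), and if \(n=0\) then \(L=\emptyset\). In either situation the definition of right quotient gives \(KL^{-1}=\emptyset\), because no \(y\in L\) with \(xy\in K\) can exist. Hence \(\asc(KL^{-1})=0\), which shows \(g^{\asc}_{-1,\mathrm{PFA}}(m,n)\subseteq\{0\}\). For the reverse inclusion I will exhibit a witness pair. Take the one-state permutation automaton over a unary alphabet whose state is final; it accepts \(a^*\), which has \(\asc=1\). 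For positive values of the nonzero parameter, any permutation-automaton language with that accepting-state complexity will serve. Such languages exist: for instance, \(A^{\mathrm q}_{m,\alpha}\) realizes \(m\) by Lemma~\ref{lem:quotient-witnesses-minimal}, and the one-state automaton realizes \(1\). Pairing this language with \(\emptyset\) (or taking \(\emptyset\) twice when \(m=n=0\)) gives a pair of permutation automata with the prescribed complexities and empty quotient, so \(0\) is attained.

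\emph{Case \(m,n\ge1\).} The inclusion \(g^{\asc}_{-1,\mathrm{PFA}}(m,n)\subseteq\mathbb{N}_{>0}\) is exactly Proposition~\ref{prop:quotient-zero-impossible}. For the reverse inclusion, fix \(\alpha\ge1\). If \(\alpha\le mn\), then Proposition~\ref{prop:quotient-unary-range} gives \(\alpha\in g^{\asc}_{-1,\mathrm{PFA}}(m,n)\). Otherwise \(\alpha>mn\ge m\), so \(\alpha\ge m\). In that case the pair \((A^{\mathrm q}_{m,\alpha},B^{\mathrm q}_{n,\alpha})\) consists of permutation automata with \(\asc\) values \(m\) and \(n\) by Lemma~\ref{lem:quotient-witnesses-minimal}. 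Its quotient has \(\asc\) equal to \(\alpha\) by Lemma~\ref{lem:quotient-result-minimal}, which itself relies on Lemmas~\ref{lem:quotient-accepted-stabilizer} and~\ref{lem:quotient-computed-final-set}. Both ranges are therefore covered, and the final sentence of the theorem is just a restatement of this conclusion.

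There is no real obstacle, since all the work sits in the earlier lemmas. The only care needed is in the degenerate case. I need the facts that \(\asc(L)=0\) holds if and only if \(L=\emptyset\), and that \(\emptyset\) is accepted by a permutation automaton. I also need a positive-complexity permutation language available for the nonzero parameter, so that the value \(0\) is genuinely realized and the spectrum is nonempty.
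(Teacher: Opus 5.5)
Your proposal is correct and follows the paper's proof: the same split into the degenerate case, where emptiness of $K$ or $L$ forces $KL^{-1}=\emptyset$, and the case $m,n\ge 1$, handled by Proposition~\ref{prop:quotient-zero-impossible}, Proposition~\ref{prop:quotient-unary-range} for $\alpha\le mn$, and the stabilizer witnesses for $\alpha\ge m$. The only deviation is in the degenerate case, where the paper builds explicit unary cyclic automata $C_t$ to realize the nonzero parameter $t$, whereas you reuse $A^{\mathrm q}_{t,t}$. That is valid because its definition and minimality do not depend on $n$; read your ``$A^{\mathrm q}_{m,\alpha}$'' with $m$ replaced by whichever parameter is nonzero, and take $\emptyset$ over the same ternary alphabet.
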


\begin{proof}
If \(m=0\) or \(n=0\), then any languages \(K,L\) with \(\asc(K)=m\) and
\(\asc(L)=n\) satisfy \(K=\emptyset\) or \(L=\emptyset\), and therefore
\(KL^{-1}=\emptyset\). Hence the only possible value is \(0\). It remains
to show that \(0\) is attained.

Fix \(t\ge 0\). Let \(U_t\) be the unary language accepted by the following
permutation automaton over \(\{a\}\): if \(t=0\), take the one-state automaton
with no final states; if \(t\ge 1\), take
\[
C_t:=\bigl([t+1],\{a\},\cdot,1,\{1,\dots,t\}\bigr),
\qquad
p\cdot a \equiv p+1 \pmod{t+1}.
\]
For \(t\ge 1\), the automaton \(C_t\) is a permutation automaton with exactly
\(t\) final states. It is minimal because, for each state \(p\in [t+1]\), the
unique word that sends \(p\) to the nonfinal state \(t+1\) is
\(a^{d_p}\), where \(d_p\in\{0,\dots,t\}\) is determined by
\(p+d_p\equiv t+1 \pmod{t+1}\). Thus distinct states have distinct right
languages, and Proposition~\ref{prop:minimal-realizes-asc} gives
\(\asc(L(C_t))=t\). For \(t=0\), the one-state nonfinal automaton accepts the
empty language and has accepting-state complexity \(0\).

Now, if \(m=0\), choose \(K=\emptyset\) over \(\{a\}\) and choose \(L=U_n\);
then \(\asc(K)=0\), \(\asc(L)=n\), and \(KL^{-1}=\emptyset\). Similarly, if
\(n=0\), choose \(L=\emptyset\) and \(K=U_m\). Hence
\(0\in g^{\asc}_{-1,\mathrm{PFA}}(m,n)\) whenever \(m=0\) or \(n=0\). This
proves \(g^{\asc}_{-1,\mathrm{PFA}}(m,n)=\{0\}\) if \(m=0\) or \(n=0\).

Now assume \(m,n \ge 1\). By Proposition~\ref{prop:quotient-zero-impossible},
we have \(0 \notin g^{\asc}_{-1,\mathrm{PFA}}(m,n)\). By
Proposition~\ref{prop:quotient-unary-range}, every
\(\alpha \in \{1,\dots,mn\}\) belongs to
\(g^{\asc}_{-1,\mathrm{PFA}}(m,n)\). On the other hand, for every
\(\alpha \ge m\), Lemma~\ref{lem:quotient-witnesses-minimal} and
Lemma~\ref{lem:quotient-result-minimal} show that the witnesses
\(A^{\mathrm{q}}_{m,\alpha}\) and \(B^{\mathrm{q}}_{n,\alpha}\) satisfy
\[
\asc(L(A^{\mathrm{q}}_{m,\alpha}))=m,\qquad
\asc(L(B^{\mathrm{q}}_{n,\alpha}))=n,
\]
and
\[
\asc\!\bigl(L(A^{\mathrm{q}}_{m,\alpha})L(B^{\mathrm{q}}_{n,\alpha})^{-1}\bigr)
=\alpha,
\]
so \(\alpha \in g^{\asc}_{-1,\mathrm{PFA}}(m,n)\). Since
\(mn \ge m\), the sets \(\{1,\dots,mn\}\) and
\(\{\alpha \in \mathbb{N} \mid \alpha \ge m\}\) together equal
\(\mathbb{N}_{>0}\). Therefore
\(g^{\asc}_{-1,\mathrm{PFA}}(m,n)=\mathbb{N}_{>0}\) for \(m,n \ge 1\). \qedbox
\end{proof}

\appendix

\section{Expanded proof of Lemma~\ref{lem:quotient-accepted-stabilizer}}
\label{app:expanded-stabilizer-proof}

For completeness we record a more explicit constructive version of the proof of
Lemma~\ref{lem:quotient-accepted-stabilizer}. The only additional observation
needed beyond the standard main-text citation is that, for each \(1\le i<k\),
\(a^{k-(i-1)}ba^{i-1}=a^{-(i-1)}ba^{i-1}=(i\,i+1)\).
Thus every adjacent transposition on \([k]\), and in particular every adjacent
transposition on \([k-1]\), is induced by a positive word over
\(\{a,b\}\).

\begin{proof}[Expanded proof of Lemma~\ref{lem:quotient-accepted-stabilizer}]
Recall that
\(G_{n,\alpha}=\{\,\rho_w\mid w\in L(B^{\mathrm{q}}_{n,\alpha})\,\}\),
where \(\rho_w\) is the permutation induced by \(w\) on the first coordinate of
\(B^{\mathrm{q}}_{n,\alpha}\).

First let \(w\in L(B^{\mathrm{q}}_{n,\alpha})\). Since \(w\) is accepted from the
initial state \((k,0)\), there exists
\(i\in\mathbb{Z}_{n+1}\setminus\{n\}\) such that
\((k,0)\cdot_B w=(k,i)\). The first coordinate after reading \(w\) is
therefore again \(k\). By definition, that first coordinate is \(k\rho_w\).
Hence \(k\rho_w=k\), and so
\(\rho_w\in\operatorname{Stab}_{S_k}(k)\). This proves
\(G_{n,\alpha}\subseteq \operatorname{Stab}_{S_k}(k)\).

For the reverse inclusion, let
\(h\in\operatorname{Stab}_{S_k}(k)\). Then \(h\) restricts to a permutation of
\([k-1]\). Every permutation of \([k-1]\) is a product of adjacent
transpositions \((i\,i+1)\) with \(1\le i<k-1\). For each such \(i\), define
\(\tau_i:=a^{k-(i-1)}ba^{i-1}\). Since \(a^k=\mathrm{id}\), we have
\(\tau_i=a^{-(i-1)}ba^{i-1}\), and conjugation shows that
\(\tau_i=(i\,i+1)\) for \(1\le i<k\).
Hence each adjacent transposition on \([k-1]\) is induced by a positive word
over \(\{a,b\}\). Concatenating realizing words along an adjacent-transposition
decomposition of \(h\), we obtain some word \(x\in\{a,b\}^{*}\) such that
\(\rho_x=h\). Because \(x\) contains no occurrence of \(c\), the second
coordinate remains \(0\) throughout the computation. Since \(h\) fixes \(k\),
we have \((k,0)\cdot_B x=(kh,0)=(k,0)\). As \(n\ge 1\), the state \((k,0)\)
belongs to \(F_B=\{\,(k,i)\mid i\in\mathbb{Z}_{n+1}\setminus\{n\}\,\}\). Hence
\(x\in L(B^{\mathrm{q}}_{n,\alpha})\), and therefore
\(h=\rho_x\in G_{n,\alpha}\). This proves
\(\operatorname{Stab}_{S_k}(k)\subseteq G_{n,\alpha}\). Combining the two
inclusions yields \(G_{n,\alpha}=\operatorname{Stab}_{S_k}(k)\).\qedbox 
\end{proof}

\bibliographystyle{splncs04}
\bibliography{references}

\end{document}